\newtheorem*{rep@theorem}{\rep@title}
\newcommand{\newreptheorem}[2]{%
\newenvironment{rep#1}[1]{%
 \def\rep@title{#2 \ref{##1}}%
 \begin{rep@theorem}}%
 {\end{rep@theorem}}}
\newtheorem{question}{Question}
\begin{document}

\title{Quadratic worst-case message complexity for State Machine Replication  in the partial synchrony model}


\author{Andrew Lewis-Pye}
\affiliation{%
  \institution{\department{Department of Mathematics} \institution{London School of Economics} \city{London} \country{UK}}
}
\email{a.lewis7@lse.ac.uk}

\begin{abstract} 

We consider the message complexity of State Machine Replication protocols dealing with Byzantine failures in the partial synchrony model. A result of Dolev and Reischuk gives a quadratic lower bound for the message complexity, but it was unknown whether this lower bound is tight, with the most efficient known protocols giving worst-case message complexity $O(n^3)$. We describe a protocol which  meets  Dolev and Reischuk's quadratic lower bound, while also satisfying other desirable properties. To specify these properties, suppose that we have $n$ replicas, $f$ of which display Byzantine faults (with $n\geq 3f+1$). Suppose that $\Delta$ is an upper bound on message delay, i.e.\  if a message is sent at time $t$, then it is received by time $ \text{max} \{ t, GST \} +\Delta $. We describe a deterministic protocol that simultaneously achieves $O(n^2)$ worst-case message complexity, optimistic responsiveness, $O(f\Delta )$ time to first confirmation after $GST$ and $O(n)$ mean message complexity.

%

  \end{abstract}


\maketitle

\section{Introduction}

Protocols for State Machine Replication (SMR) \cite{lamport1978time,schneider1990implementing} are used to ensure that a distributed network of `replicas' (e.g.\ servers) can agree on an order in which to implement client-initiated service requests. Traditionally, such protocols have been studied in the `permissioned' setting, which means that the protocol  is carried out by a fixed set of replicas, all of whom are known to each other from the start of the protocol execution. Together with the related task of reaching Byzantine Agreement \cite{pease1980reaching,lamport1982byzantine}, SMR protocols are a central topic in distributed computing. For an up-to-date account of the latest advances in this area, see \cite{cohen2021byzantine}.

Of direct relevance to the study of SMR protocols is the fact that recent years have seen a surge of interest in `blockchain' technologies, such as Bitcoin \cite{nakamoto2008bitcoin}.  The key factor differentiating Bitcoin from traditional SMR protocols is that it functions in a `permissionless' setting, which means that the protocol operates over an unknown network of replicas that anybody is free to join. Despite this distinction,  the surge of investment in blockchain technology has led to significant renewed interest in permissioned SMR protocols. On the one hand, this is because many of the envisaged blockchain applications actually pertain to fundamentally permissioned settings  -- a prominent example is the use of SMR protocols to facilitate international transfers between members of banking consortia.  On the other hand, there are also now several established techniques (e.g. \cite{chen2016algorand}) for implementing permissioned protocols in permissionless settings. The latest implementations of some crytocurrencies, such as Ethereum \cite{buterin2018ethereum}, use permissioned SMR protocols \cite{buterin2017casper} that are tailored in a simple way to instantiate a form of permissionless entry.

These new applications bring with them a focus on the need for permissioned SMR protocols that are both robust and scalable. Robustness means being able to deal with arbitrary (especially malicious) behaviour from a non-trivial proportion of replicas. In the parlance of the distributed computing literature, protocols that can deal with behaviour of this kind are called `Byzantine Fault Tolerant' (BFT).  Practical solutions may also need to tolerate unbounded periods of bad network connectivity and denial-of-service attacks. A standard way to model this is to work in the `partial synchrony' communication model of \cite{DLS88}, which means that protocols are required to  guarantee consistency at all times (i.e. that non-faulty replicas never disagree on the ordering of client requests), but are only required to make progress on processing new client requests during time intervals when message delivery is reliable, i.e.\ during intervals when messages are guaranteed to be delivered within some known time bound $\Delta$.

\vspace{0,2cm} 
\noindent \textbf{The issue of scalability.} PBFT \cite{castro1999practical} is well-known as the first practical BFT SMR protocol for the partial synchrony model. When PBFT was first conceived, however, typical real-world SMR implementations would involve a number of replicas $n$ in the single digits.  Today, implementations are envisaged that use hundreds or even thousands of replicas. It therefore becomes vital to consider protocols that scale well with the number of replicas, and message complexity is an important metric in this regard.\footnote{Message complexity will be defined formally in Section \ref{model}. Roughly, it is the worst-case number of messages that need to be sent to confirm requests once message delivery is reliable.} A result of Dolev and Reishchuck \cite{dolev1985bounds} establishes a quadratic lower bound on the message complexity when the given bound $f$ on the number of faulty replicas is $\Theta (n) $ (as we assume here).  Many BFT SMR protocols have been developed since PBFT (e.g. Zyzzyva \cite{kotla2010zyzzyva}, Prime \cite{amir2010prime}, SBFT \cite{gueta2019sbft}, Upright \cite{clement2009upright}, 700BFT \cite{guerraoui2010next}, BFT- SMaRt \cite{bessani2014state}, Tendermint \cite{buchman2016tendermint,buchman2018latest}, Hotstuff \cite{yin2018hotstuff}, LibraBFT \cite{baudet2019state}, Casper \cite{buterin2017casper}), but none achieve worst-case message complexity better than $O(n^3)$. 

The aim of this paper is to describe a BFT SMR protocol for the partial synchrony model that achieves worst-case message complexity $O(n^2)$, and so which is the first such protocol to meet the quadratic lower bound of Dolev and Reishchuk. The protocol we describe also satisfies other desirable properties: Optimistic responsiveness, $O(f\Delta )$ time to first confirmation after $GST$ and $O(n)$ mean message complexity. These terms will be explained in Sections \ref{model} and \ref{rw}.  In the context of the partial synchrony model,  a protocol is  said to have \emph{optimal resiliency} if it can handle any number $f$ of Byzantine faults, so long as  $n\geq 3f+1$ (this being optimal by a result of Dwork, Lynch and Stockmeyer \cite{DLS88}). 

\begin{theorem} \label{t1} 
Consider the partial synchrony model. There exists a BFT SMR protocol with optimal resiliency  and  worst-case message complexity $O(n^2)$. This protocol can simultaneously be made to satisfy optimistic responsiveness, to give $O(f\Delta )$ time to first confirmation after $GST$ and $O(n)$ mean message complexity.
\end{theorem}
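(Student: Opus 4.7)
The plan is to build on the HotStuff-style leader-based framework with threshold signatures so that both the steady-state and the view-change subprotocols cost only $O(n)$ messages per attempted decision. In the steady state, the leader of the current view broadcasts a proposal, each replica returns a single signed vote to the leader, the leader aggregates the $2f+1$ votes into a quorum certificate via a threshold signature scheme, and then broadcasts the certificate. This costs $O(n)$ messages per confirmed block, which simultaneously delivers the claimed $O(n)$ mean message complexity and optimistic responsiveness.

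The central difficulty is the view-change subprotocol. PBFT-style view changes require all-to-all broadcast of view-change messages and cost $\Theta(n^2)$ per view change; with up to $f+1$ consecutive faulty leaders after $GST$ this gives $\Theta(n^3)$. I would replace this with a linear view-change: on timeout, each replica sends a single signed timeout message only to the putative next leader, who aggregates $2f+1$ such timeouts into a threshold-signed timeout certificate and broadcasts it once to begin the new view. An honest new leader thus completes a view change in $O(n)$ messages. If the new leader is faulty and never broadcasts a certificate, replicas trigger a second-level timeout and forward their timeouts to the following leader. Each failed attempt again costs $O(n)$ messages, so across the at most $f+1$ successive leaders needed to reach an honest one the cumulative cost is $O(nf) = O(n^2)$.

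I would then install a pacemaker that synchronizes honest replicas on the same view within $O(\Delta)$ time after $GST$: once the slowest honest replica receives the highest timeout certificate held by any honest replica, all honest replicas adopt that view, and thereafter a per-view timeout of $\Theta(\Delta)$ ensures they advance through at most $f$ faulty leaders in $O(f\Delta)$ total time before settling on an honest leader who confirms the next client request. Safety follows from the standard quorum-intersection argument: two conflicting quorum certificates from overlapping views would require two $2f+1$ quorums to disagree at a common honest replica, a contradiction. Liveness and the $O(f\Delta)$ latency bound then reduce to verifying that the pacemaker delivers an honest leader within this window.

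The hardest step, and the main obstacle, will be bounding the worst-case message count against an adversary that can schedule messages, corrupt $f$ replicas, and produce arbitrary protocol deviations. One must argue that a Byzantine replica cannot amplify its message footprint beyond a constant per honest recipient per view, cannot replay timeouts into other views, and cannot coerce honest replicas into sending more than $O(1)$ extra messages on its behalf. The $O(n^2)$ bound will therefore rest on an amortized or potential-function argument charging every message sent either to a specific confirmed decision or to a specific faulty leader whose total message ``budget'' is bounded by $O(n)$.
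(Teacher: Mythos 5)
There is a genuine gap, and it sits exactly where the paper's main new idea lives. Your linear view-change has each replica send its timeout only to the next leader, who aggregates a certificate and broadcasts it once. But if that leader is Byzantine, it can collect the $n-f$ timeouts, form a valid certificate, and reveal it to only a few honest replicas (or to none). Those replicas enter the new view and reset their timers while the others do not, and nothing in your scheme resynchronizes them: the ``slowest honest replica receives the highest timeout certificate held by any honest replica'' only if someone reliably disseminates that certificate. If every honest replica that sees a certificate echoes it to everyone, each view change costs $\Theta(n^2)$ messages and a run of $\Theta(f)$ faulty leaders costs $\Theta(n^3)$ --- this is precisely the barrier that Cogsworth-style pacemakers hit and that the paper cites as the open problem. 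If nobody echoes it, honest replicas' views can drift apart indefinitely under selective disclosure, and liveness (hence your $O(f\Delta)$ latency claim) fails. Your closing paragraph frames the remaining work as an amortized accounting of adversarial message amplification, but the real obstacle is this view-synchronization dilemma, and no charging argument resolves it within your architecture.

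The paper's resolution is structurally different: views are grouped into epochs of $f+1$ consecutive leaders, the quadratic all-to-all echo of the entry certificate is paid only once per epoch (so it amortizes to $O(n)$ per view and is incurred at most once between GST and the first confirmation), and \emph{within} an epoch replicas stay synchronized without any echo, because every replica resets its local timer on entering the epoch and wishes to enter view $v$ at the deterministic trigger time $12v\Delta$, independently of whether the (possibly faulty) leader of view $v-1$ ever circulated anything; the view certificate broadcast by the leader is only a fast path, which together with the ``enter view $v$ early upon seeing view $v-1$'s block confirmed'' rule preserves optimistic responsiveness. Since at least one of the $f+1$ leaders in the epoch is correct and each view costs $O(n)$ messages, the worst case is $O(n^2)$ and the first confirmation comes within $O(f\Delta)$ of GST. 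Your proposal would need to import some mechanism of this kind (or an equivalent one) before its complexity and latency claims can be justified.
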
 

Theorem \ref{t1} above is stated in terms of \emph{message} complexity, i.e.\ the number of messages sent. A more fine-grained analysis is achieved by thinking in terms of \emph{communication} complexity, i.e.\ the number of bits  sent to confirm requests. 
The variable $\kappa$ in the theorems below is a security parameter specifying the length of signatures and hashes. Theorem \ref{t2} gives an analogue of Theorem \ref{t1} in terms of communication complexity, but drops the requirement for $O(n)$ mean communication complexity. Theorem \ref{t3}  can be seen as a corollary of Theorem \ref{t2}. We will also give an independent proof, which is simpler than a direct proof of Theorem \ref{t2}. 

\begin{theorem} \label{t2} 
Consider the partial synchrony model. There exists a BFT SMR protocol with optimal resiliency  and  worst-case communication complexity $O(\kappa n^2)$. This protocol can simultaneously be made to satisfy optimistic responsiveness and to  give $O(f\Delta )$ time to first confirmation after $GST$. \end{theorem}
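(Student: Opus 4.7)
The plan is to construct a leader-based SMR protocol that proceeds in numbered views and relies on a $(2f+1)$-out-of-$n$ threshold signature scheme (e.g.\ BLS-based) so that a quorum of $2f+1$ signature shares aggregates into a single certificate of size $O(\kappa)$. Views use a deterministic round-robin leader rotation, so that after GST no more than $f$ consecutive views have a Byzantine leader. Safety must hold in every execution (even before GST), whereas liveness and the stated complexity bounds are only required after GST, so I would separate the two arguments cleanly.

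In the happy path of a view, the leader broadcasts a proposal, each honest replica returns a signed vote, and the leader aggregates $2f+1$ votes into a quorum certificate (QC) which it then broadcasts. Following the linear communication pattern popularised by Hotstuff, each of these phases consists of $O(n)$ messages of size $O(\kappa)$, giving $O(\kappa n)$ communication per view. For view change I would use a pacemaker in which each replica sends its highest locked QC to the next leader; since every locked QC is itself of size $O(\kappa)$, this new-view step also contributes only $O(\kappa n)$ communication. The new leader picks the highest QC it received from a set of $2f+1$ replicas and proposes atop it, accompanied by a single aggregate certificate proving that this QC really is the maximum in the set. Safety across view changes is then preserved by the standard locked-QC/chained-voting argument, with the essential modification that new-view messages can themselves be bundled into an $O(\kappa)$-sized threshold signature rather than transmitted individually.

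For the time to first confirmation, I would use a view-synchronisation subroutine that aligns honest replicas on a common view within $O(\Delta)$ of GST and view timeouts of size $\Theta(\Delta)$; after synchronisation at most $f$ consecutive views fail before an honest leader takes over, each consuming $O(\Delta)$ time, giving $O(f\Delta)$ overall. Summing the $O(\kappa n)$ per-view communication over at most $f$ failed views plus one successful view yields $O(\kappa n^2)$ worst-case communication. Optimistic responsiveness falls out of the fact that in the happy path each phase advances as soon as $2f+1$ responses arrive, without waiting on the timer. The principal obstacle is the view-change step: in PBFT-style view changes each replica would transmit $O(n)$ uncompressed QCs, leading to $O(\kappa n^3)$ worst-case cost, so the key technical work is to show that both the individual QCs and the \emph{selection} of the extended QC can be certified in $O(\kappa)$ bits using threshold signatures without compromising safety across arbitrarily many view changes before GST.
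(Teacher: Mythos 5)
Your outline reproduces the Hotstuff-style within-view structure correctly, but it leaves a genuine gap exactly where the theorem's difficulty lies: the view-synchronisation step, which you invoke as a black box (``a view-synchronisation subroutine that aligns honest replicas on a common view within $O(\Delta)$ of GST''). Such a subroutine with the right worst-case cost is precisely what was missing prior to this paper: known deterministic pacemakers cost $\Omega(n^2)$ messages per view change, so summing over the up to $f$ failed views your argument relies on gives $\Omega(\kappa n^3)$, while the $O(n)$-per-view alternatives are randomised and have unbounded worst case. Your $O(\kappa n)$ per-view accounting implicitly assumes correct replicas already have aligned timers, but with optimistic responsiveness views advance at network speed, so at GST correct replicas may be in arbitrarily different views and some explicit, bounded-cost resynchronisation is unavoidable. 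The paper's construction is the missing piece: views are grouped into epochs of $f+1$ views with rotated leaders; an epoch change costs $O(n^2)$ messages (all replicas send $\mathtt{epoch}$ messages to the $f+1$ epoch leaders, a leader aggregates $n-f$ of them into a threshold-signed epoch certificate which is broadcast and echoed); local timers are reset on epoch entry and drive per-view trigger times, with early entry allowed on confirmation; and within the epoch each view change is leader-aggregated into a broadcast view certificate at cost $O(n)$. Since one of the $f+1$ leaders is correct, the quadratic synchronisation cost is paid only once per confirmed block in the worst case, which is what actually yields $O(\kappa n^2)$ and $O(f\Delta)$ time to first confirmation.

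A second gap specific to communication (rather than message) complexity for SMR: once every message is shrunk to $O(\kappa)$ bits, a leader or replica may be unaware of up to $f$ confirmed predecessor blocks that it needs in order to validate a proposal and to execute requests in order; routinely shipping chains costs $\Theta(\kappa n^2)$ per view, and on-demand pulls can be abused by Byzantine replicas. You do not address this, and the paper flags it as the central obstacle; its solution is the super-epoch mechanism, in which every block of a super-epoch must carry the same ordered request set as its parent, so that attaching only the highest QC and its single corresponding block suffices, and each super-epoch ends with the confirmed block being broadcast together with its stage-3 QC so all correct replicas catch up within $\Delta$. Finally, your proposed ``aggregate certificate proving that this QC really is the maximum'' is not something a threshold signature scheme provides (shares combine only over identical messages, whereas the new-view messages carry different QCs), and it is also unnecessary: the usual lock/higher-QC voting rule together with quorum intersection already gives safety and liveness across view changes, which is how the paper argues both its consistency lemma and the liveness claim.
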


\begin{theorem} \label{t3} Consider the partial synchrony model. There exists a protocol for Byzantine Agreement with optimal resiliency  and  worst-case communication complexity $O(\kappa n^2)$. This protocol can be made to ensure that all correct replicas terminate within time  $O(f\Delta )$ after $GST$.
\end{theorem}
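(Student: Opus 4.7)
The plan is to adapt the standard leader-rotation template used in HotStuff-style protocols to the single-shot Byzantine Agreement setting, using threshold signatures to aggregate attestations so that each view costs only $O(\kappa n)$ communication, and showing that at most $O(f)$ views suffice after $GST$.

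First I would fix a view structure indexed by $v=1,2,\ldots$ with a rotating leader $L_v$. Each view consists of four phases: (i) a \emph{new-view} phase, in which every replica sends $L_v$ its locally held highest-view certificate, authenticated by a threshold-signature share on the view number $v$; (ii) a \emph{propose} phase, in which $L_v$ aggregates $2f+1$ such shares into an $O(\kappa)$-bit proof of view entry, picks the value $x^\ast$ carried by the highest certificate it has seen (or an input of its own if none exists), and broadcasts the pair; (iii) a \emph{vote} phase, in which each replica, upon checking the aggregated proof and safety rule, returns a threshold-signature share on $x^\ast$ to $L_v$; (iv) a \emph{commit} phase, in which $L_v$ aggregates $2f+1$ shares into a certificate $C_v(x^\ast)$ of size $O(\kappa)$ and broadcasts it, at which point each correct replica that receives $C_v(x^\ast)$ decides $x^\ast$. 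Each phase involves a single all-to-one or one-to-all pattern, so the per-view cost is $O(n)$ messages of $O(\kappa)$ bits.

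Next I would establish safety and liveness by the usual BFT argument. Safety: a correct replica votes for $x^\ast$ in view $v$ only if the aggregated new-view proof shows that no certificate for a value $\neq x^\ast$ exists in any view $<v$; because any certificate requires $2f+1$ signatures and at most $f$ replicas are Byzantine, quorum intersection in one honest replica forces all later certificates to carry the same value, so no two correct replicas can decide differently. Liveness: views are governed by a timer of length $c\Delta$ for some constant $c$ covering the four phases; at timeout a replica multicasts a signed view-change trigger enabling $L_{v+1}$ to start. After $GST$, all messages arrive within $\Delta$, so every view completes in $O(\Delta)$ time; rotating leaders guarantees that within $f+1$ views at least one leader is correct, and such a leader's view succeeds and produces $C_{v}(x^\ast)$, which reaches every correct replica within one further $\Delta$.

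Multiplying the $O(\kappa n)$ per-view communication by the $O(f)$ views needed after $GST$ yields worst-case communication $O(\kappa n f) = O(\kappa n^2)$ and time to termination $O(f\Delta)$. The main obstacle I expect is the new-view phase: naively shipping the entire locked certificate from every replica would be $O(\kappa n)$ bits per replica and could bloat the cost. The trick is to use threshold-aggregated view-entry proofs so the leader needs only an $O(\kappa)$-bit summary plus at most one highest certificate, and to argue carefully that the safety rule on the leader's proposal (choosing the value of the highest reported certificate) is preserved even when Byzantine replicas misreport, because the quorum-intersection argument on $C_v$ still pins down any possibly-decided value. Wrapping up, Theorem \ref{t3} can also be recovered as a corollary by running the SMR protocol of Theorem \ref{t2} until the first request is confirmed, but the direct construction above avoids the SMR machinery entirely.
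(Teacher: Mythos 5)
There is a genuine gap, and it sits exactly where the paper's main contribution lies: view synchronization. Your per-view pattern (new-view shares sent to the leader, an aggregated $O(\kappa)$-bit view-entry proof broadcast by the leader, and a timeout trigger ``multicast'' on expiry) is HotStuff's linear view change, and by itself it does not guarantee that all correct replicas ever overlap in the view of the first correct leader after $GST$. If the timeout trigger goes only to the next leader, a Byzantine leader can release its view-entry proof to a strict subset of the correct replicas; those replicas advance and restart their timers while the others stay behind, and since replicas have no synchronized clocks (and may be spread over many views before $GST$), after a run of $\Theta(f)$ Byzantine leaders the stragglers may never send new-view shares for the view of the next correct leader, so that leader either never assembles $2f+1$ shares or does so only after other correct replicas have already timed out of its view. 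If instead every replica multicasts its timeout to everyone to repair this, each view change costs $\Omega(n^2)$ messages and a chain of $f$ faulty leaders costs $\Omega(\kappa n^3)$ bits, which is exactly the bound you are trying to beat. Your sentence ``after $GST$ every view completes in $O(\Delta)$ time'' is the statement that needs a mechanism and a proof; the paper supplies it by grouping views into epochs of $f+1$ views, paying $O(n^2)$ only at epoch boundaries (epoch certificates that every correct replica re-broadcasts, which also catches stragglers up within $\Delta$), and driving the views inside an epoch by timers reset at epoch entry with fixed trigger times $12v\Delta$, plus a fast path for optimistic responsiveness.

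The safety argument is also unsound as written. With a single vote phase and decide-on-certificate, a certificate $C_v(x)$ can be formed and shown to just one correct replica, which decides $x$, while the $2f+1$ new-view reports gathered in view $v+1$ contain no trace of it (voters do not hold the certificate they contributed to), so a later view can certify and decide a different value. Your claim that the aggregated new-view proof ``shows that no certificate for a value $\neq x^\ast$ exists in any view $<v$'' cannot be realized: absence from $2f+1$ reports is not absence. One needs locks and multiple certificate phases; the paper simply reuses HotStuff's three-phase voting and lock rule inside each view and invokes its Lemma \ref{consistency}, and with only the highest QC forwarded a two-phase variant would additionally face the hidden-lock liveness problem. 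Note also that the paper's route to Theorem \ref{t3} is the reverse of your closing remark: it takes the SMR protocol behind Theorem \ref{t1} and trims the payloads for the agreement setting (blocks carry a decision value derived from $n-f$ signed inputs, and view messages carry only the highest QC with its block), so that every message has length $O(\kappa)$.
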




%


 \subsection{Defining the model and metrics} \label{model} 
 
 \textbf{The replicas.} We consider a set of $n \geq  3f + 1$ replicas, indexed by $i\in  \{ 0,\dots, n-1 \}$. Certain replicas may be \emph{corrupted} by an \emph{adversary}, and may then display Byzantine failures. We allow that the adversary can choose which replicas to corrupt in a dynamic fashion over the course of the protocol execution, so long as they corrupt a total of at most $f$ many replicas. We will refer to the (at least $n-f$) replicas that are never corrupted as \emph{correct}. 
 Every pair of replicas has a bidirectional, reliable, and authenticated channel between them, i.e., each replica is able to send messages individually to each of the other replicas, these messages will eventually arrive, and the recipient can verify the sender’s identity. When we say that a replica  \emph{broadcasts} a message, this means that it simultaneously sends the message to all replicas.\footnote{It will be technically convenient to suppose that, when a replica broadcasts a message, it also `sends' the message to itself. This just means that, for the purpose of carrying out its instructions, the replica immediately regards the message as having been received. }

\vspace{0.1cm} 
\noindent \textbf{Cryptographic assumptions}. We assume the existence of a cryptographic signature scheme, a collision-resistant hash function $h$, a public key infrastructure (PKI) to validate signatures,  and a threshold signature scheme \cite{boneh2001short,cachin2005random,shoup2000practical}. As is standard, we assume that replicas are polynomial-time bounded, which means that replicas are given a security parameter $\kappa$ specifying the length of signatures and hashes, and may perform a number of basic operations during the protocol execution which is at most polynomial in $\kappa$. In particular, this means that the length of the protocol execution is bounded by a polynomial in $\kappa$.  As in other SMR protocols \cite{yin2018hotstuff}, the threshold signature scheme is used to create a  signature of size $\kappa$ combining signatures from $k$ of $n$ many replicas. In our case, $k$ will always take the value $n-f$. We suppose that hashes are unique during any protocol execution, i.e.\ for any two inputs $m$ and $m'$ given to the hash function, $h(m)=h(m')$ implies $m=m'$. We also restrict attention to protocol executions in which the adversary is unable to break the cryptographic schemes described above.

\vspace{0.1cm} 
\noindent \textbf{Communication models}. Three standard communication models are: 

\begin{enumerate} 
\item \emph{The synchronous model}.   All replicas begin the protocol execution simultaneously with access to a global clock. There is a known bound $\Delta$ on the time for message delivery. 

\item  \emph{The partial synchrony model}.  The known bound $\Delta$ on message delivery only applies after some unknown \emph{Global Synchronisation Time}\footnote{The model is really intended to model a scenario in which consistency must be maintained at all times, and new requests must be confirmed during synchronous intervals of \emph{sufficient length}. The use of the unknown Global Synchronisation Time is a standard technical convenience, which can be argued to be equivalent to these requirements. See \cite{DLS88} for further details.}  (GST), i.e.\ if a message is sent at time $t$ then it arrives by time $\text{max} \{ \text{GST}, t \} +\Delta$.
Replicas do not have access to a  global clock, but can accurately measure local time, i.e.\ each replica has an accurate stopwatch/timer.\footnote{This assumption is made for technical convenience, but it will be clear that the protocol we describe can easily be adjusted (without communication complexity cost) to handle scenarios in which there is a known upper bound on the ratio between the speed of the timers held by different replicas.} There is no requirement that replicas all start the protocol execution simultaneously, but all correct replicas are assumed to join prior to GST. 


\item  \emph{The asynchronous model}. Each message can take any finite amount of time to be delivered. Replicas do not have access to timers. 

\end{enumerate} 

An important assumption in the description of the partial synchrony model above is that replicas do not have globally synchronised clocks. If replicas have access to globally synchronised clocks, then a simple approach to achieving the quadratic lower bound of Dolev and Reischuk in terms of message complexity (which works if one is willing to drop the requirement for optimistic responsiveness -- see Section \ref{rw} for a definition of the latter term) is to implement a version of Hotstuff in which each view is carried out at a previously determined time according to the global clock. More generally, the problem of achieving the quadratic lower bound is interesting so long as \emph{any} of the following three conditions hold: (1) Replicas  do not have globally synchronised clocks; (2) There is a requirement for optimistic responsiveness; (3) We are concerned with communication (rather than just message) complexity. We will consider dropping all of these assumptions in this paper. 

A seminal result of result of Fisher, Lynch and Paterson \cite{fischer1985impossibility} shows that, for the asynchronous model,  deterministic SMR protocols are not possible and that no SMR protocol (deterministic or otherwise) is able to achieve unbounded values for the worst-case complexity measures we consider here. We work in the partial synchrony model and it will be convenient to think of the adversary as controlling message delivery subject to the constraints outlined above.

\vspace{0.1cm} 
\noindent \textbf{The blockchain}. Rather than being concerned with the rate at which clients produce requests, we consider a protocol that produces chains of \emph{blocks}, each of which is thought of as containing a constant bounded number of requests. Replicas begin the protocol execution with agreement on a unique \emph{genesis} block. All blocks other than the genesis block must have a unique \emph{parent} block, whose hash is specified in the block. If $B'$ is the parent of $B$, then $B'$ is also called a \emph{predecessor} of $B$, and all predecessors of $B'$ are predecessors of $B$. All blocks must have the genesis block as a predecessor -- to avoid constant mention of replicas having to check for block `validity', it will be convenient in what follows to assume replicas simply ignore blocks that cannot be verified to satisfy these conditions. Two distinct blocks are called \emph{incompatible} if neither is a predecessor of the other. 

\vspace{0.1cm} 
\noindent \textbf{Consistency and liveness}. We consider a protocol that \emph{confirms} blocks. The requirement for \emph{consistency} is that no pair of incompatible blocks should ever be confirmed. 
The requirement for \emph{liveness} is that arbitrarily many blocks will be confirmed if the protocol is run for sufficiently long after GST -- for the protocols we describe here, we will be able to guarantee that each interval of length $O(f\Delta)$ after GST produces a new confirmed block.


\vspace{0.1cm} 
\noindent \textbf{The complexity measures}. 
In the protocols we describe here, messages will contain \emph{epoch/view} numbers. Since we assume that replicas are polynomial-time bounded, the written forms of epoch/view numbers are of length $O(\kappa)$.

\vspace{0.1cm} 
It follows from the result of Fisher, Lynch and Paterson, mentioned above, that SMR protocols in the partial synchrony model cannot guarantee the confirmation of requests prior to GST$+\Delta$.  For this reason, the worst-case complexity measure described in the following definition counts the number of messages sent by correct replicas after $\text{GST}+ \Delta$.  The maximum value in the definition is taken over all possible actions of the adversary and choices for GST (which we can suppose chosen by the adversary). 

\begin{definition} \label{wc} 
The worst-case message complexity is the maximum number of messages sent by correct replicas after $\text{GST}+\Delta$ and prior to the first block confirmation.  
\end{definition}

We also consider a version of Definition \ref{wc} which is modified in the obvious way to give the corresponding notion for communication complexity. 
Let  $B^{\ast}_i$ be the unique block (if it exists)  which is the $i$th to be confirmed after $\text{GST}+\Delta$. 
In the following definition, $x_i$ denotes the number of messages sent by correct replicas after the confirmation of $B_i^{\ast}$ and before the confirmation of $B_{i+1}^{\ast}$. The slightly circuitous wording of the definition stems from the fact that the adversary may prevent  $ (1/k) \sum_{i=1}^{k}  x_i$ from moving towards a limit value as $k$ increases, although it always remains bounded.

\begin{definition} \label{meandef} 
For $i\in \mathbb{N}_{>0}$, let $x_i$ be defined as above. We say the mean message complexity is $O(n)$ if there exists some constant $C$ and a function $k(n)$ such that, for all $n$ and all  $k>k(n)$,  $\frac{1}{k} \sum_{i=1}^{k}  x_i <Cn$ for all protocol executions in which at least $k$ blocks are confirmed after GST$+\Delta$. 
\end{definition}

 \subsection{Related work} \label{rw} 
 
 In this paper, we are concerned with the partial synchrony model.
 PBFT was the first practical BFT SMR protocol for the partial synchrony model. As is now standard for BFT SMR protocols in the partial synchrony model, the  protocol implements a sequence of \emph{views}, with each view having a designated leader who is responsible for driving consensus on the next set of requests to be confirmed. Subsequent to PBFT a great number of BFT SMR protocols have been developed (e.g.\  Zyzzyva \cite{kotla2010zyzzyva}, Prime \cite{amir2010prime}, SBFT \cite{gueta2019sbft}, Upright \cite{clement2009upright}, 700BFT \cite{guerraoui2010next}, BFT- SMaRt \cite{bessani2014state}) which work by using the same paradigm of leaders, views and view changes, but none achieve worst-case message complexity better than $O(n^3)$. 

%
 
For PBFT-like protocols, such as those listed above, the complexity bottleneck tends to be the part of the protocol dedicated to \emph{view changes} (which are required because leaders may be faulty). Recently, a number of elegant protocols, such as Tendermint \cite{buchman2016tendermint,buchman2018latest} and Casper \cite{buterin2017casper} have been described, which considerably simplify the leader replacement protocol.  In the case of Tendermint and Casper, however, this simplicity is achieved at the expense of \emph{optimistic responsiveness}.   Informally,  optimistic responsiveness \cite{yin2018hotstuff} requires that, for a fixed input $\Delta$, an unbounded number of blocks/requests can be confirmed in any time interval of fixed length after GST, so long as leaders are correct, and so long as messages are delivered and instructions carried out fast enough. So, when leaders are correct, the rate at which blocks are confirmed after GST depends on the actual speed of the system, and is not limited by high estimates for $\Delta$ (although liveness with Byzantine failures still depends on the bound $\Delta$ holding after GST). 

The significant achievement of Hotstuff \cite{yin2018hotstuff}, was to simultaneously achieve optimistic responsiveness, while keeping a linear bound on communication complexity \emph{within} each view. Liveness for Hotstuff, however, still requires implementing a protocol to ensure \emph{view synchronisation}, i.e.\ to ensure that all replicas eventually spend long enough within the same view with correct leader. In Hotstuff,  this task of view synchronisation is delegated to a separate \emph{Pacemaker} module. The implementation of the Pacemaker module is left unspecified, although it is pointed out that a practically infeasible method of using expontially increasing `timeouts' would suffice. 

The task of achieving a general and efficient implementation of the Pacemaker module is addressed in a couple of recent papers by Naor et.\ al. \cite{naor2019cogsworth} and Naor and Keidar  \cite{naor2020expected}. The first of these two papers describes a version of the Pacemaker module named Cogsworth, which, when faced with benign failures, achieves  $O(n)$ mean message complexity. In the face of Byzantine failures, however, Cogsworth requires $\Omega (n^2)$ messages per view change, which means that a sequence of $\Omega (n)$ many view changes before finding a correct leader requires sending $\Omega (n^3)$ many messages. The module described in \cite{naor2020expected} then improves on the ability to deal with Byzantine failures, achieving $O(n)$ mean message complexity in the face of Byzantine failures, and $O(\Delta)$ expected time to first block confirmation after GST (when implemented with a protocol such as Hotstuff). Since the protocol utilises a random process of leader selection it has unbounded worst-case message complexity, but if modified in the obvious way (using a deterministic sequence of rotating leaders) gives a deterministic protocol with worst-case message complexity which is $\Omega(n^3)$, with time to first confirmation after GST which is $O(f^2\Delta)$. 

The study of protocols for Byzantine Agreement was introduced in \cite{pease1980reaching,lamport1982byzantine}, where it was shown that the problem can be solved in the synchronous model iff $f<n/2$ when a PKI is provided. A recent paper by Momose and Ren \cite{momose2020optimal} describes a protocol with optimal resiliency and  quadratic communication complexity for the synchronous model, thereby meeting the lower bound of Dolev and Reischuk \cite{dolev1985bounds} for that model.  For the partial synchrony model, Dwork, Lynch and Stockmeyer \cite{DLS88} showed that Byzantine Agreement can be solved iff $f<n/3$. The state-of-the-art in terms of message/communication complexity for the partial synchrony model is achieved by using SMR protocols to solve Byzantine Agreement. The most efficient previously known protocols have worst case message  complexity $O(n^3)$ -- this can be achieved, for example, by implementing Hotstuff with any Pacemaker module that has message complexity $O(n^2)$ per view-change. $O(n)$ \emph{expected} message complexity can be achieved by combining Hotstuff with the Pacemaker module of Naor and Keidar \cite{naor2020expected}.

We refer the reader to \cite{cohen2021byzantine} for a  survey which also goes into detail re the state-of-the-art for probabilistic protocols in the asynchronous setting, and well as further results on protocols for achieving Byzantine Agreement.

\section{The protocol for Theorem \ref{t1}} \label{protocol} 
As is standard, we describe a protocol for which the instructions are divided into a sequence of \emph{views}. Within each view, the instructions will be essentially the same as for Hotstuff -- where the protocol differs from Hotstuff is in the mechanisms used to achieve view synchronisation. We assume the reader is familiar with Hotstuff, but  the role of Hotstuff in our protocol can really be entirely blackboxed. All one needs to understand about Hotstuff to follow the protocol we describe here, is that it achieves the following: 
\begin{itemize} 
\item Consistency: Incompatible blocks will never be confirmed so long as each individual replica never executes view instructions out of order, i.e. so long as no correct replica executes instructions for a view $v$ and then instructions for a view $v'<v$. 
\item Liveness: Block confirmation will occur, so long as all replicas are eventually in the same view  for sufficiently long and that view has a correct leader. 

\end{itemize}  

Since Theorem \ref{t1} refers to message rather than communication complexity, we don't really need to use a threshold signature scheme for the protocol of this section. We do so to facilitate our discussion in Section \ref{commy}, which describes how to modify the protocol to deal with communication complexity. 

\subsection{The high level idea.} The basic plan is that we will  spend $O(n^2)$ messages to synchronise,  but will do so only every $f+1$ views. Since at least one leader amongst those $f+1$ views will be correct and will produce a new confirmed block (if the view is initiated after GST), and since the  complexity of each view is $O(n)$, this results in $O(n^2)$ messages being required before a correct leader produces a new confirmed block. 

\subsection{Epochs.} As well as being divided into views, the instructions are divided into \emph{epochs}. Each epoch consists of $f+1$ views that are specific to that epoch.  Each epoch therefore has $f+1$ different leaders  (one for each view within the epoch), at least one of which must be correct. Leaders are rotated, so that the leaders for epoch $i$ are replicas $i\text{ mod }n, \dots, i+f \text{ mod }n$. We consider three different kinds of \emph{certificate}, each of which combines a set of $n-f$ signatures  into a single threshold signature: 
\begin{itemize} 
\item QCs (quorum certificates) correspond to a specific round of voting within a view. 
\item VCs (view certificates) combine votes to begin a new view. 
\item ECs (epoch certificates) combine votes to begin a new epoch.   
\end{itemize} 

Just as for Hotstuff, each view has three stages of voting. We order QCs by epoch number, then by  view number within the epoch, and then by stage number within the view. Each QC will \emph{correspond} to a specific block. All votes and blocks are assumed to be signed by the sender, and to contain the corresponding epoch, view and stage number.

Before describing how to coordinate the views within an epoch, we'll first describe how to implement epoch changes. We start that way because understanding epoch changes does not require understanding the rest of the protocol, and because the instructions for epoch changes are very simple. This simplicity stems from the fact that we can afford to spend $O(n^2)$ messages for each epoch change.   Roughly, when any replica \emph{wishes to enter epoch} $e$, it sends an ``$\mathtt{epoch}\  e$'' message to each of the $f+1$ leaders of epoch $e$. When a correct leader receives  $n-f$ $\mathtt{epoch}\  e$ messages, it combines those messages into a message with a single threshold signature, which we call an EC for epoch $e$. The leader then broadcasts that EC, signalling that replicas should begin the epoch. When any correct replica first sees an EC for epoch $e$, they broadcast that EC and begin the epoch. The precise instructions are below. 

\vspace{0.5cm} 
\begin{mdframed}

 \paragraph{The instructions for epoch changes.}   \label{4}
\begin{itemize} 
\item  When any replica \emph{wishes to enter epoch} $e$, it sends an ``$\mathtt{epoch}\  e$'' message to each of the $f+1$ leaders of epoch $e$ (potentially including itself). The conditions under which a replica \emph{wishes to enter} an epoch will be described  subsequently in the instructions for view changes.
\item If a replica is one of the $f+1$ leaders for epoch $e$, they enter epoch $e$ if they are presently in a lower epoch\footnote{If at any point any replica finds that the requirements to enter two different views or epochs are simultaneously fulfilled, then the replica enters the higher of the two.} and either: 
\begin{enumerate} 
\item[(a)] They receive $n-f$ $\mathtt{epoch}\  e$ messages (from different replicas), or; 
\item[(b)] They receive an EC for epoch $e$.
\end{enumerate} 
 In the case that (a) holds, the leader combines the $n-f$ $\mathtt{epoch}\  e$ messages into a message with a single threshold signature, which we call an EC for epoch $e$, and the leader then broadcasts that EC. In the case that (b) holds, the leader simply broadcasts the EC. 
\item Each replica which is not a leader for epoch $e$ enters the epoch if they are presently in a lower epoch and if they see an EC for epoch $e$. Upon entering epoch  $e$, they broadcast that EC for epoch $e$ to all replicas.
\item A replica leaves an epoch when it enters a higher epoch. 
\end{itemize} 

\end{mdframed} 

\vspace{0.5cm} 
The instructions described above  ensure: 
\begin{enumerate} 
\item[$(\dagger_0)$]  If a correct replica $i$ enters epoch $e$ at time $t$, then all correct replicas will have entered an epoch $e'\geq e$ by time $\text{max} \{ t, GST \} +\Delta$, since they will see the EC sent by $i$ by that time. 
\item[$(\dagger_2)$] Each epoch change has message complexity $O(n^2)$, since each replica sends $O(f)$ messages upon wishing to enter epoch $e$ and $O(n)$ messages upon entering epoch $e$. 

\end{enumerate}

\subsection{Proceeding through the views within an epoch.} A simple approach to coordinating the views within an epoch would be to require correct replicas to restart their timer upon entering the epoch, and then to begin each view at a designated time according to their timer (so that timers are reset once per epoch). So long as all correct replicas begin the epoch within time $\Delta$ of each other, this approach would also (sufficiently) synchronise correct replicas for each view within the epoch.  The problem with this simple approach is that it would negate the possibility for optimistic responsiveness -- for the protocol to be optimistically responsive, we need correct leaders to be able to produce confirmed blocks at a rate that depends on how fast messages are delivered, rather than at a maximum speed which is determined by $\Delta$. To achieve optimistic responsiveness, we therefore also allow replicas to begin a view $v+1$ earlier than the designated time, if a confirmed block corresponding to view $v$ is produced. 

 Roughly, the instructions are as follows. When each replica starts epoch $e$, they  reset their personal timer to 0 and begin by ``wishing to enter view 0'' (of epoch $e$). Each replica will wish to enter a view $v>0$  when either their timer reaches the designated time for that view, or else they see a block corresponding to view $v-1$ confirmed. When they wish to enter a certain view $v$, they send a ``$\mathtt{view}\ v$'' message to the leader of view $v$. When the leader sees $n-f$ $\mathtt{view}\ v$ messages, they form them into a VC for that view, and broadcast that VC to all replicas. Replicas will begin view $v$ when they see a VC for the view (unless they are already in a higher view).  The precise instructions, which are described below, refer to the  `message state' of a replica: At any point in the protocol execution, the message state of a replica is the set of all messages received by that replica so far (including messages they have `sent' to themselves). 

\vspace{0.5cm} 

\begin{mdframed}

 \paragraph{The instructions for view changes.}   \label{ecs} 
 
\begin{itemize} 
\item When each replica starts epoch $e$, they  reset their personal timer to 0 and begin by  \emph{wishing to enter view} 0.

\item A replica \emph{wishes to enter view} $v>0$ of epoch $e$ at the first point at which it is in a lower view (for this epoch) and either: 
\begin{itemize} 
\item[(a)] It sees a block corresponding to view $v-1$ confirmed, or; 
\item[(b)] Its personal timer reaches $12v \Delta$ (the idea to have in mind here is that each view will  last for at most $12\Delta$ after $GST$ when leaders are correct). We'll refer to  $12v \Delta$ as the \emph{trigger time} for view $v$. 
\end{itemize} 

\item  When any replica \emph{wishes to enter view} $v$, it sends a $\mathtt{view}\  v$ message to the leader of view  $v$.  It attaches to that message its present message state (which includes the highest QC it has seen). 
\item The leader of view $v$ of epoch $e$ enters the view if they are in epoch $e$ and both: 
\begin{enumerate} 
\item[(a)] $i=0$ or they are presently in a lower view, and;
\item[(b)]  They receive $n-f$ $\mathtt{view}\  v$ messages (from different replicas).
\end{enumerate} 
 When this occurs, the leader combines the $n-f$ $\mathtt{view}\  v$ messages into a message with a single threshold signature, which we call a VC for view  $v$. The leader considers the highest QC it has seen, which corresponds to a block $B'$ (say). The leader then broadcasts a new proposed block $B$ with parent  $B'$, together with the newly produced VC, and attaches to that message its present message state. 
 \item Each replica which is not the leader of view $v$ of epoch $e$ enters the view if they are in epoch $e$ and both: 
 \begin{enumerate} 
 \item[(a)] $v=0$ or they are presently in a lower view, and; 
 \item[(b)] they see a VC for view $v$. 
 \end{enumerate} 

\item Each replica in epoch $e$ wishes to enter epoch $e+1$ the first time that either it sees a confirmed block for view $f$ of epoch  $e$, or else its personal timer reaches $12(f+1)\Delta$. Later, we'll also refer to $12(f+1)\Delta$ as a \emph{trigger time}. 

\end{itemize} 

\end{mdframed}

\vspace{0.5cm}
The instructions above suffice to describe how views and epochs are coordinated. It only remains to describe the instructions within each view, which are similar to Hotstuff. 

\subsection{The instructions within each view.} In addition to the instructions for view and epoch changes above, the remaining instructions are as follows. These instructions are similar to Hotstuff and, in particular, make use of the same `lock' functionality (we refer the reader to \cite{buchman2016tendermint,yin2018hotstuff} for an explanation of this functionality).    All replicas begin in epoch 1 with their `lock' set to be the genesis block, which corresponds to view 0 of epoch 0.  The `highest QC' \emph{below} a block $B$ is the highest QC that corresponds to any block (including $B$) in the initial segment of that chain. 
The instructions for each view are then as follows. 
\vspace{0.5cm} 

\begin{mdframed} 

\paragraph{The instructions while in view $v\geq 0$} At any time while in view $v$, until such a point as it wishes to enter a greater view or epoch (or actually enters a greater view or epoch), each replica carries out the following instructions. 
\begin{itemize} 
\item  \textbf{Stage 1 Voting.} Each replica considers the first block $B$ they receive for view $v$ produced by the leader (this arrives with the VC for the view). If it extends their lock, or else if the highest QC below the block is as high or higher than they have previously seen (if it's higher they release their lock), they send a stage 1 vote for $B$ to the leader. The stage 1 vote for $B$ is a signed message including the stage, view and epoch number and a hash of the block. 
\item If the leader receives $n-f$ stage 1 votes for the block they proposed, then they combine these votes into a stage 1 QC for the block, and broadcast this QC. 
\item  \textbf{Stage 2 Voting.}  The first time a replica sees a stage 1 QC for the block $B$ (specified above) they send a stage 2 vote for $B$ to the leader. 
\item If the leader receives $n-f$ stage 2 votes for the block they proposed, then they combine these votes into a stage 2 QC for the block, and broadcast this QC. 
\item   \textbf{Stage 3 Voting.} The first time a replica sees a stage 2 QC for the block $B$ (as specified above) they set $B$ as their lock and send a stage 3 vote for $B$ to the leader.
\item If the leader receives $n-f$ stage 3 votes for the block they proposed, then they combine these votes into a stage 3 QC for the block (which marks the block as confirmed), and broadcast this QC. 
\end{itemize} 

\end{mdframed}

\section{Consistency, liveness and complexity} \label{proofsection} 


First of all, we deal with consistency, which works exactly as for Hotstuff. For the sake of making our paper as self-contained as possible, we include a proof here. 

\begin{lemma} \label{consistency} 
The protocol described in Section \ref{protocol} satisfies consistency. 
\end{lemma}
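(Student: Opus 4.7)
The plan is to follow the standard Hotstuff safety argument, exploiting the fact that the view and epoch structure of our protocol guarantees that no correct replica processes instructions out of order. First I would verify the monotonicity premise cited at the start of Section~\ref{protocol}: every correct replica proceeds through (epoch, view) pairs in strictly increasing lexicographic order. This is immediate from the protocol description, since a replica only wishes to enter view $v$ while in a strictly lower view of the current epoch, and only enters epoch $e$ while in a strictly lower epoch.

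With monotonicity in hand, I would set up the standard quorum-intersection argument. Suppose, for contradiction, that incompatible blocks $B$ and $B'$ are both confirmed. Let $(e, v)$ be the (epoch, view) pair at which $B$ is confirmed, and choose $(e', v') > (e, v)$ to be the minimal pair at which some block $B'$ incompatible with $B$ is confirmed (if no such pair exists, swap the roles of $B$ and $B'$). Since the stage-3 QC for $B$ aggregates $n-f$ stage-3 votes, each from a replica that first saw a stage-2 QC for $B$ and locked on $B$, at least $n - 2f \geq f+1$ correct replicas are locked on $B$ at $(e, v)$.

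For $B'$ to be confirmed at $(e', v')$ it must first accumulate a stage-1 QC, and thus at least $n-f$ stage-1 votes. By quorum intersection some correct replica $r$ is locked on $B$ at $(e, v)$ and also votes for $B'$ at $(e', v')$. By the stage-1 voting rule, either $B'$ extends $r$'s lock --- impossible since $B$ and $B'$ are incompatible --- or the highest QC below $B'$ is at least as high as every QC $r$ has seen, and in particular at least as high as the stage-2 QC for $B$ that caused $r$ to lock on $B$. I would then argue by induction on intermediate pairs that any QC formed strictly between $(e, v)$ and $(e', v')$ corresponds to a block extending $B$, using the minimality of $(e', v')$ together with the three-stage voting structure. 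This forces $B'$ to extend $B$, contradicting their incompatibility.

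The hard part will be the inductive step guaranteeing that intermediate QCs extend $B$: it requires a careful case analysis of the interplay between the lock-update rule (triggered by a stage-2 QC) and the stage-1 voting rule, and is precisely the crux of the Hotstuff safety proof. Since the within-view logic of our protocol coincides with Hotstuff and the only novelty is that views are grouped into epochs, I expect to import the Hotstuff argument almost verbatim, with the scalar view index replaced by a lexicographic (epoch, view) pair.
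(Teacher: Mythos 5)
Your proposal is the same Hotstuff-style locking/quorum-intersection argument that the paper uses, but with a different choice of minimality that makes it heavier. The paper first rules out the case that the two incompatible blocks are confirmed in the same view (each correct replica casts at most one stage-3 vote per view, and any two $(n-f)$-quorums share a correct replica), and then takes $v''$ to be the \emph{least} view above $v$ in which a block incompatible with $B$ receives a \emph{stage-1 QC}. With that choice the contradiction is immediate and no induction over intermediate views is needed: a correct replica among the $n-2f$ lockers of $B$ that contributes a stage-1 vote for $B''$ would have to see, below $B''$, a QC higher than its stage-2 QC for $B$, i.e.\ a QC from a view in $(v,v'')$ for a block incompatible with $B$, which minimality forbids. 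You instead minimize over views in which an incompatible block is \emph{confirmed}, and consequently must prove by induction that every QC formed strictly between $(e,v)$ and $(e',v')$ is for a block extending $B$ -- exactly the step you defer to the Hotstuff literature. That induction does go through, but two points need to be made explicit. First, your case split ``either $B'$ extends $r$'s lock --- impossible since $B$ and $B'$ are incompatible'' presupposes that $r$ is still locked on $B$ when it votes; $r$'s lock may have moved in intermediate views, so the induction hypothesis must be strengthened to say that all intermediate QCs (and hence the locks of these correct replicas) are on blocks extending $B$, after which the ``extends $r$'s lock'' branch still yields the contradiction. Second, your setup (``minimal pair $>(e,v)$ at which an incompatible block is confirmed, else swap the roles of $B$ and $B'$'') does not cover the case that $B$ and $B'$ are confirmed in the same view; that case requires the separate vote-uniqueness intersection argument with which the paper opens. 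Both repairs are routine, but as written your route is a slightly longer path to the same conclusion, and adopting the paper's minimality over stage-1 QCs collapses your induction into a single application of minimality.
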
 

\begin{proof}  Suppose, towards a contradiction, that two incompatible blocks $B$ and $B'$ are both confirmed. During the description of the protocol in Section \ref{protocol}, we numbered views within each epoch, so that each epoch has views numbered from 0 to $f$. Just  for the duration of this proof, however, it is convenient to suppose that views all have distinct numbers, so that epoch 1 contains views $0,\dots,f$, then epoch 2 contains views $f+1,\dots, 2f+1$, and so on. Note also, that blocks and votes all contain their epoch, view and stage number. So, we can talk of blocks and votes as corresponding to specific views (and stages of voting). 
Since any two sets of $n-f$ replicas must have a correct replica in the intersection, and since each correct replica only sends at most one stage 3 vote in each view, it immediately follows that $B$ and $B'$ must correspond to different views.  Without loss of generality, suppose that $B$ corresponds to view $v$  and that $B'$ corresponds to a greater view $v'>v$. Let $v''$ be the least view $>v$ such that a block $B''$ corresponding to $v''$ is incompatible with $B$ and receives a stage 1 QC. The fact that $B$ is confirmed means that at least $n-2f$ correct replicas must set $B$ as their lock while in view $v$. None of those $n-2f$ correct replicas which set $B$ as their lock during view $v$ can produce a stage 1 vote for $B''$ during view $v''$. Since any two sets of $n-2f$ correct replicas contain a correct replica in the intersection, this means that $B''$ cannot receive a stage 1 QC, and gives the required contradiction.
\end{proof} 

Next, we prove liveness. As we do so, however, it is convenient to consider also the message complexity: 

\begin{lemma} \label{liveness} 
The protocol described in Section \ref{protocol} satisfies liveness, $O(n^2)$ worst-case message complexity and $O(f\Delta )$ time to first confirmation after $GST$. 
\end{lemma}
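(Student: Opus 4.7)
The plan is to prove liveness, the message bound, and the time bound together by a timing analysis. First I would establish a \emph{clean-epoch} lemma showing that once correct replicas enter a common epoch, a confirmation follows in $O(f\Delta)$ real time; then I would show, via $(\dagger_0)$, that correct replicas are driven into such a clean epoch within $O(f\Delta)$ of GST. The message and time bounds will then follow by summing contributions from only $O(1)$ epoch transitions.

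For the clean-epoch lemma, I would suppose that for some $t_0 \geq GST$ the first correct replica to enter some epoch $e$ does so at time $t_0$. By $(\dagger_0)$ every correct replica is in epoch $\geq e$ by $t_0+\Delta$, and in fact no correct replica leaves $e$ during $[t_0, t_0+12(f+1)\Delta]$: no correct signature on an $\mathtt{epoch}\ e+1$ message exists before $t_0+12(f+1)\Delta$ (each such signature requires the signer's timer in $e$ to reach $12(f+1)\Delta$, and the first such timer expires at $t_0+12(f+1)\Delta$), so no EC for $e+1$ can be received before that. Because timers reset on entry and tick at real time, every correct replica wishes to enter view $v$ of $e$ at some time in $[t_0+12v\Delta, t_0+12v\Delta+\Delta]$. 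At least one view $v^*\leq f$ has a correct leader; for that view the leader collects $n-f$ view messages by $t_0+12v^*\Delta+2\Delta$, broadcasts its VC together with a proposed block (received by $+3\Delta$), and the three voting stages complete by $t_0+12v^*\Delta+9\Delta < t_0+12(v^*+1)\Delta$. Hence a block is confirmed by $t_0+O(f\Delta)$.

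Next, to reach a clean epoch, I would let $\bar{e}$ be the largest epoch entered by any correct replica by GST. By $(\dagger_0)$ every correct replica is in epoch $\geq \bar{e}$ by $GST+\Delta$, and absent a prior confirmation each one's timer in its current epoch reaches $12(f+1)\Delta$ no later than $GST+O(f\Delta)$, whereupon it signs an $\mathtt{epoch}\ \bar{e}+1$ message. Once $f+1$ such signatures exist a leader of $\bar{e}+1$ can form an EC, and a further application of $(\dagger_0)$ to that EC places every correct replica into some common epoch $e^*\geq \bar{e}+1$ within $\Delta$ of one another, all inside $GST+O(f\Delta)$. The first correct entry to $e^*$ is at some $t_0 \in [GST, GST+O(f\Delta)]$, and the clean-epoch lemma then yields a confirmation by $t_0+O(f\Delta) \leq GST+O(f\Delta)$.

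For complexity, by $(\dagger_2)$ each epoch transition costs $O(n^2)$ messages, and inside each epoch the $\leq f+1$ views cost $O(n)$ each (each replica sends one $\mathtt{view}\ v$ message plus up to three votes to the leader, and the leader broadcasts a VC and three QCs), giving $O(fn)=O(n^2)$ per epoch. Summed over the $O(1)$ epochs contributing messages in the window $(GST+\Delta, \text{first confirmation}]$ this gives worst-case message complexity $O(n^2)$, and the time bound $O(f\Delta)$ follows from the two steps above. The hardest step will be the rigorous bookkeeping in the ``reaching a clean epoch'' paragraph: Byzantine replicas could in principle form ECs for several successive epochs to push correct replicas through them in quick succession, and one has to argue that any such cascade is bounded. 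The defense is that each EC for an epoch $e$ still requires $f+1$ fresh correct $\mathtt{epoch}\ e$ signatures, each of which a correct replica only produces once per $12(f+1)\Delta$ of real time (a full timer cycle in the prior epoch), so that at most $O(1)$ epoch transitions can occur inside any $O(f\Delta)$ window after GST.
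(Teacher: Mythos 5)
Your overall skeleton is the same as the paper's (timers drive all correct replicas into the next epoch within $O(f\Delta)$ of GST, the first view of that epoch with a correct leader produces a confirmation, and only one $O(n^2)$ epoch change plus $O(n)$ views of cost $O(n)$ are paid before that), but there is a genuine gap at the heart of the liveness claim. In your clean-epoch lemma you simply assert that once the correct leader of view $v^*$ broadcasts its VC and proposal, ``the three voting stages complete by $t_0+12v^*\Delta+9\Delta$''. Nothing in your argument explains why correct replicas send stage 1 votes at all: a correct replica only votes if the proposed block extends its lock, or if the highest QC below the block is at least as high as any QC it has previously seen, so a replica locked on a block the leader does not extend could legitimately refuse to vote and the view would fail despite a correct leader and synchrony. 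This is exactly the non-trivial part of liveness for lock-based protocols, and it is where the paper spends its key paragraph: the $\mathtt{view}\ v_1$ messages carry each sender's message state (in particular its highest QC); any correct replica's lock was set upon seeing a stage 2 QC, so at least $n-2f$ correct replicas had seen the corresponding stage 1 QC before they stopped executing earlier views; by quorum intersection one of the $n-f$ senders whose $\mathtt{view}$ messages the leader uses had seen that stage 1 QC, so the leader's proposal carries a QC at least as high as every correct replica's lock and the voting rule fires for everyone. Without some version of this argument your time bound, and hence liveness, is not established.

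Two smaller points. First, an EC requires $n-f$ (not $f+1$) $\mathtt{epoch}$ messages; your surrounding argument survives because all $n-f$ correct replicas do send them, but the count as written is wrong. Second, several of your timing claims (``no correct replica leaves $e$ before $t_0+12(f+1)\Delta$'', ``every correct replica wishes to enter view $v$ in $[t_0+12v\Delta,\,t_0+12v\Delta+\Delta]$'', and the no-cascade defense at the end) ignore the alternative trigger in the protocol: replicas also wish to advance upon seeing a confirmed block for the previous view, not only on timer expiry. This is harmless here, but only if you say so explicitly, e.g.\ by arguing up to the minimum of the timeout and the first confirmation time, as the paper does with $t_2=\min\{t^*,\,t_1+12(v_1+1)\Delta\}$; as written, the statements are false for executions in which early confirmations occur.
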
 
\begin{proof}  Let $e_0$ be the greatest epoch that any correct replica is in at any time $<\text{GST}$, let $t_1$  be the first time (if there exists such) at which any correct replica enters epoch $e_0+1$, and let
$v_1$ be the least view of epoch $e_0+1$ which has a correct leader. Our basic aim is to prove that view $v_1$ of epoch $e_0+1$ will produce a confirmed block. To do so, we first want to produce a bound on $t_1$. 

\vspace{0.2cm} 
\noindent \textbf{Bounding} $t_1$. Note first that, by $(\dagger_0)$ (of Section \ref{protocol}), all correct replicas will see an EC for epoch $e_0$ by  $GST +\Delta$, and will be in an epoch $\geq e_0$ by this time. Generally, for any correct  replica to enter an epoch $e>1$, at least $n-2f$ correct replicas have to be in epoch $e-1$ and wish to enter epoch $e$. This means that the first epoch $e>e_0$ that any correct replica will enter at any time $\geq GST$ will be $e_0+1$. In fact, we can argue that $t_1\leq GST +2 \Delta + 12(f+1)\Delta $. To see this, suppose towards a contradiction that no correct replica enters epoch $e_0+1$ (and therefore no higher epoch either) by time  $GST +2 \Delta + 12(f+1)\Delta $. Then all correct replicas will wish to enter epoch $e_0+1$ by time $GST+ \Delta +   12(f+1)\Delta$, and will send an $\mathtt{epoch}\ e_0+1$ message to all leaders of epoch $e_0+1$. A correct leader for epoch $e_0+1$ will then receive $n-f$ $\mathtt{epoch}\ e_0+1$ messages by time $GST+ 2\Delta +   12(f+1)\Delta$, and will enter epoch $e_0+1$ at this time.

\vspace{0.2cm} 
\noindent \textbf{Proving that view $v_1$ produces a confirmed block}.   Let $t^{\ast}$ be the first time (if it exists) at which a block corresponding to view $v_1$ of  $e_0+1$ is confirmed, and let $t_2 = \text{min} \{ t^{\ast}, t_1+12(v_1+1)\Delta \}$. Note that, for a correct replica to enter a view or epoch ($>1$), at least $n-2f$ correct replicas must first wish to enter the respective view or epoch. Correct replicas will only wish to enter a given view or epoch when either they see a block corresponding to the previous view\footnote{Here, it is to be understood that the view previous to view 0 of epoch $e$ is view $f$ of epoch $e-1$.} confirmed, or else their timer reaches the relevant `trigger time'  (see the instructions for view changes in Section \ref{protocol}). This means that:

\begin{itemize} 
\item[($\diamond_0)$] No correct replica can enter an epoch $>e_0+1$ prior to $t_2$. 
\item[($\diamond_1)$] No correct replica can enter any view $v>v_1$ of epoch $e_0+1$ prior to $t_2$. 
\end{itemize} 

We claim that:

\begin{enumerate} 
\item[$(\diamond_2)$]  The number of messages sent by correct replicas (combined) in the time interval $(\text{GST}+\Delta, t_2]$ is $O(n^2)$. 
\item[$(\diamond_3)$]  A block corresponding to view $v_1$ will be confirmed by time $t_2$.
\end{enumerate} 

From $(\diamond_2)$ and $(\diamond_3)$ it follows that $O(n^2)$ messages are sent by all correct replicas combined after $GST+\Delta$ and before the first block proposed by a correct leader is confirmed. It also follows that the  time to first confirmation after $GST$ is  $O(f\Delta )$,  since  $t_1\leq GST +2 \Delta + 12(f+1)\Delta $ and $t_2 \leq  t_1+12(v_1+1)\Delta$.

Statement $(\diamond_2)$ follows straight from the definition of the protocol -- correct replicas only carry out at most one change of epoch (from $e_0$ to $e_0+1$) in the time interval $(\text{GST}+\Delta, t_2]$, and each of the $O(n)$ many views within an epoch has the correct replicas combined sending $O(n)$ many messages. 

Next we show $(\diamond_3)$. This now follows essentially just as in Hotstuff. Suppose, towards a contradiction, that no block corresponding to view $v_1$  is confirmed by time $t_1 +12(v_1+1)\Delta$. By $(\diamond_0)$, it follows that all correct replicas will be in epoch $e_0+1$ by time $t_1+\Delta$. By $(\diamond_0)$ and $(\diamond_1)$, it follows that all correct replicas will be in view $v_1$ of epoch $e_0+1$ by time $t_1+3\Delta +12v_1\Delta$ (and within time $\Delta$ of each other). Before broadcasting the VC for view $v_1$, the leader receives $\mathtt{view}\ v_1$ messages from at least $n-2f$ correct replicas, and each of these messages includes the highest QC that the replica has seen at the point that the message is sent. When correct replicas wish to enter view $v_1$ and send a $\mathtt{view}\ v_1$ message, they cease executing instructions for previous views. Now consider the highest lock of any correct replica $i$  before receiving the block $B$ from the leader of view $v_1$. This lock was set because $i$ saw a stage 2 QC for a block $B'$. This means that at least $n-2f$ correct replicas must have produced stage 2 votes for $B'$ (prior to any point at which they wished to enter view $v_1$ and stopped executing instructions for views $<v_1$), and so must have seen a stage 1 QC for $B'$ by that time. Since any two sets of $n-2f$ correct replicas have a correct replica in the intersection, it follows that one of the $\mathtt{view}\ v_1$ messages seen by the leader of $v_1$ must be from a correct replica who had already seen the stage 1 QC for $B'$. 
This means that all correct replicas will then produce stage 1,2 and 3 votes for the block proposed by the  leader of view  $v_1$, and this block will be confirmed before time  $t_1 +12(i+1)\Delta$, giving the required contradiction. 
\end{proof} 

\vspace{0.2cm} The fact that the protocol is optimistically responsive is clear from the protocol description, so we are left to consider mean message complexity. 

\begin{lemma} 
The protocol described in Section \ref{protocol} has $O(n)$ mean message complexity.
\end{lemma}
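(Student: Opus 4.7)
The plan is to amortize the $O(n^2)$ cost of each epoch over the confirmed blocks produced within it. By $(\dagger_2)$ the epoch change contributes $O(n^2)$ messages, and each of the $f+1$ views inside an epoch contributes $O(n)$, for a per-epoch total of $O(n^2)$. It therefore suffices to show that, in the long run, epochs confirm $\Omega(n)$ blocks on average; this is precisely where the $f+1$ leaders per epoch pay off.

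The key sub-claim I would prove is that every epoch $e$ entered after $GST+\Delta$ produces at least $c_e$ confirmed blocks, where $c_e$ is the number of correct replicas among the $f+1$ leaders of epoch $e$. I would argue this view-by-view by replaying the liveness argument of Lemma \ref{liveness} inside a single epoch. After $GST+\Delta$, all correct replicas enter the epoch within $\Delta$ of each other (by $(\dagger_0)$) and reset their timers simultaneously up to a $\Delta$ offset, so their trigger times for each view within the epoch agree to within $\Delta$. Whenever a view $v$ has a correct leader, the correct replicas all enter $v$ within $O(\Delta)$ of each other (either optimistically via the stage-3 QC for view $v-1$, or by timer at local time $12v\Delta$), and the argument of Lemma \ref{liveness} then shows that the three voting stages complete and a block is confirmed well before local time $12(v+1)\Delta$. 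A faulty leader in an intermediate view merely causes replicas to advance to $v+1$ at their timer trigger, still within $\Delta$ of each other, so subsequent correct leaders in the same epoch are unaffected. Verifying that synchronization is preserved across alternating correct and faulty leaders within one epoch is the step that will require the most care, but it is essentially the same lock-plus-quorum-intersection argument used for liveness.

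Granting this sub-claim, the counting is routine. The leaders of epoch $i$ are replicas $i \bmod n,\dots,(i+f) \bmod n$, so over any $n$ consecutive epochs each replica serves as a leader in exactly $f+1$ of them; hence $\sum_{e=e_0}^{e_0+n-1} c_e = (n-f)(f+1) = \Omega(n^2)$ using $n \geq 3f+1$. Consequently any window of $n$ consecutive epochs entered after $GST+\Delta$ costs at most $O(n^3)$ messages while producing $\Omega(n^2)$ confirmed blocks, giving an amortized cost of $O(n)$ messages per confirmed block.

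To match Definition \ref{meandef}, I would take $k(n) = n^2$ (any function growing strictly faster than $n$ would work). For $k>k(n)$, let $m$ be the number of epochs spanned by the first $k$ blocks confirmed after $GST+\Delta$. The sub-claim and the counting above give $\sum_{e} c_e = \Omega(mn)$ once $m \geq n$, so $m = O(k/n)$, after discarding a bounded initial segment corresponding to the (at most one) epoch that straddles $GST+\Delta$ whose $O(n^2)$ message contribution is absorbed into the constant $C$. Then $\sum_{i=1}^{k} x_i \leq O(mn^2) = O(kn)$, so $\frac{1}{k}\sum_{i=1}^k x_i < Cn$ for a suitable $C$, yielding the required bound.
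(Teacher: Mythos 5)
Your proposal is correct and follows essentially the same route as the paper: extend the liveness argument of Lemma \ref{liveness} so that \emph{every} view with a correct leader in an epoch entered after GST confirms a block, and then amortize the $O(n^2)$ per-epoch cost over the $\Omega(n)$ blocks each such epoch confirms on average (your leader-rotation count $(n-f)(f+1)$ per $n$ epochs is just a more explicit version of this, and like the paper it implicitly uses the standing assumption $f=\Theta(n)$ rather than merely $n\geq 3f+1$). The only difference is that you spell out the counting and the matching with Definition \ref{meandef} in more detail than the paper does.
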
 
\begin{proof} Let $e_0$ be as defined in the proof of Lemma \ref{liveness}, i.e.\ let $e_0$ be the greatest epoch that any correct replica is in at any time $<\text{GST}$.  The proof of Lemma \ref{liveness} actually suffices to establish that all correct leaders for views in epochs $>e_0$ will produce confirmed blocks (just redefine $t_1$ in the proof to be the first time at which any correct replica enters the relevant epoch and omit the parts of the proof that are now not relevant). The lemma therefore follows, since epochs $>e_0$ will confirm $\Omega(n)$ many blocks on average and will each require correct replicas to send $O(n^2)$ many signatures. 
\end{proof}

\section{Communication complexity} \label{commy} 

\subsection{The issue in dealing with communication complexity.} All messages sent by replicas in the protocol of Section \ref{protocol} would be of length $O(\kappa)$, were it not for the fact that the instructions required some messages to have attached the message state of the replica.\footnote{Recall from Section \ref{model} that the written versions of epoch numbers are $O(\kappa)$.} When replicas send a $\mathtt{view}\ v$ message, for example, they are required to attach their message state. In fact, the instructions differ from the original Hotstuff protocol in that regard -- in the Hotstuff protocol, replicas are only required to attach the highest QC they have seen to the message (which is called a NEW-VIEW message in that paper). The question then becomes, if one follows the instructions for Hotstuff, is the information received by the leader actually enough to produce the next block? If the leader is to produce a block which does not repeat requests from earlier in the blockchain, then it isn't enough just to see the highest QC, they also need to know all blocks which are predecessors of the block to which that QC corresponds. One could allow that requests be repeated in the blockchain, and dictate that only the first instance of a request should count (with clients numbering different instances of requests to avoid confusion), but this does not negate the fact that any replica actually needs to know all blocks in the chain to carry out the requests in order. So, there is a subtle difference between the information that is required to build the blockchain and the information that is actually required to execute the requests, in that case. Generally, one has the issue that (even after GST), faulty leaders may confirm $f$ consecutive blocks, with up to $f$ many of the correct replicas not being aware of those blocks. If the protocol is to remain live, then those correct replicas certainly have to be made aware of the confirmed blocks \emph{at some point}. If one has leaders routinely broadcasting $f$ many predecessors each time they broadcast a new block, then each view will have communication complexity $\Theta(\kappa n^2)$. If one has previous blocks only sent upon demand by the relevant replica, then it seems difficult to avoid faulty replicas making demands that cause communication complexity  $\Theta(\kappa n^2)$ per view. 

\subsection{Byzantine Agreement.} To deal with this issue, we first observe that such considerations become unproblematic in the case that we only wish to satisfy Byzantine Agreement \cite{lamport1982byzantine}. Recall that, in the task of reaching Byzantine Agreement, each replica  is given an input (which we assume to have length $O(\kappa)$), that all correct replicas are required to terminate with the same output, and that if all correct replicas are given the same input $u$, then $u$ must be their common output.

\vspace{0.2cm} 
To prove Theorem \ref{t3}, we make the following changes to the protocol of Section \ref{protocol}:
\begin{itemize} 
\item The protocol begins with an extra instruction that has each correct replica broadcast a signed message with its input. 
\item Blocks no longer contains requests. Any block that has the genesis block as parent must contain signed inputs from $n-f$ processors. The \emph{decision value} corresponding to this set of signed inputs is the most common signed value, with ties broken by least hash. If a correct replica is unable to propose a new block when instructed to do so, it waits for time $\Delta$ and then omits the instruction if still unable to proceed. 
\item Blocks which do not have the genesis block as parent contain only their epoch and view number and a hash of their parent block, together with a decision value, which must be the same as the decision value for their parent block (otherwise the block is not valid and will be ignored by correct replicas). 
\item When a replica sends a $\mathtt{view}\ v$ message, it now attaches only the highest QC it has seen, together with the block corresponding to that QC. 
\item When a leader proposes a new block $B$ with parent block $B'$ other than the genesis block, it now broadcasts only $B,B'$ and the QC for $B'$, together with the relevant VC. Correct replicas will ignore the new block unless it arrives with $B'$, a QC for $B'$, and unless the decision and hash values match. If $B$ has the genesis block as parent, then the leader broadcasts only the signed block $B$, together with the relevant VC. 
\item When a correct replica sees a confirmed block with decision value $u$, it broadcasts the block together with the stage 3 QC, before terminating with output $u$.  
\end{itemize} 

\noindent \textbf{Verification}. Since correct replicas will ignore a new proposed block unless either it has parent the genesis block or else they are able to verify that it has the same decision value as its parent, it follows that no block can receive a QC unless it records the same decision value as its parent or has the genesis block as parent. For any block $B$ with a QC, it follows inductively that all predecessors (other than the genesis block)  also receive QCs, and so all record the same decision value. 
The proof of Lemma \ref{consistency}, which was oblivious to the content of blocks beyond the fact that they should satisfy certain structural properties to be valid,  suffices to show that incompatible blocks cannot be confirmed, which means that all confirmed blocks must have the same decision value. Since all signed inputs from correct replicas will be delivered by GST$+\Delta$, the proof of Lemma  \ref{liveness} suffices to show that a confirmed block will be produced and all correct replicas will terminate within time $O(f\Delta )$ after $GST$. Since $O(n^2)$ messages will be sent after GST$+\Delta$ and prior to the first point at which a correct replica sees a confirmed block, and since all messages are of length $O(\kappa)$, the worst-case communication complexity is $O(\kappa n^2)$.  If all correct replicas receive the same input $u$,  then the decision value of any (valid) block with the genesis block as parent must be $u$. It follows that any block receiving a QC must have decision value $u$.

\subsection{Proving Theorem \ref{t2}.}  In the following proof, we assume either that all requests are of length $O(\kappa)$, or else that the length of requests is discounted when counting communication complexity. As described in Section \ref{model}, we also suppose that blocks contain a constant bounded number of requests. The basic idea is that we carry out repeated instances of Byzantine Agreement, using a protocol like that described in the proof of Theorem \ref{t3}. 
We make the following changes to the protocol of Section \ref{protocol}. 

\vspace{0.2cm} 
\noindent (1) We consider \emph{super-epochs}: 
\begin{itemize} 
\item As before, each epoch contains $f+1$ views. Now, though, we consider also \emph{super-epochs}, which may contain any number of epochs. Each super-epoch will guarantee the confirmation of one new block. 
\item The genesis block corresponds to view 0 of epoch 0 of super-epoch 0. 
\item We consider QCs ordered by super-epoch, then by epoch, then by view, and then by stage number within the view.
\item Replicas begin in super-epoch 1. Upon entering any super-epoch, replicas immediately wish to enter epoch 1 of that super-epoch. 
\item Within each super-epoch we rotate leaders as before, and also rotate so that the leaders of epoch 1 of super-epoch $s$ are replicas $s\text{ mod }n, \dots, s+f \text{ mod }n$.

\item Upon first seeing any confirmed block $B$ corresponding to super-epoch $s$, correct replicas broadcast $B$, together with a stage 3 QC for $B$, and then immediately begin super-epoch $s+1$. 
\end{itemize} 

\noindent (2) We stipulate how requests should be included in blocks as follows: 
\begin{itemize} 
\item A block whose parent corresponds to a previous super-epoch cannot contain requests included in any predecessor block (otherwise it fails to be valid). 
\item Blocks whose parent belongs to the same super-epoch must contain the same ordered set of requests  as their parent block to be valid. 
\end{itemize} 

\noindent (3) We change the information attached to messages as follows: 
\begin{itemize} 
\item  When a replica sends a $\mathtt{view}\ v$ message, it now attaches only the highest QC it has seen, together with the block corresponding to that QC. 
\item When a leader proposes a new block $B$ with parent block $B'$ other than the genesis block, it now broadcasts only $B,B'$ and the QC for $B'$, together with the relevant VC. Correct replicas will ignore the new block unless it arrives with $B'$, a QC for $B'$, and until they can verify that the conditions on requests included in $B$ (described above) are satisfied.  If $B'$ is the genesis block, then the leader just broadcasts $B$, together with the relevant VC. 
\end{itemize} 

\noindent (4) The rest of the instructions while within any given view or epoch are exactly as before, except that (just as replicas will not enter any view unless already in the corresponding epoch) replicas will not enter any epoch unless already in the corresponding super-epoch. Replicas will also cease carrying out the instructions for any super-epoch as soon as they leave that super-epoch.

\vspace{0.3cm} 
\noindent \textbf{Verification}. First of all, we consider the structure of blocks and the relationship with super-epochs. 

\vspace{0.2cm} 
\noindent \emph{The requests inside blocks in a chain.} Just as in the proof of Theorem \ref{t3}, it follows inductively that for any block $B$  that receives a QC, all predecessors (other than the genesis block) receive QCs, and all predecessors that belong to the same super-epoch record the same ordered set of requests. Again, the proof of Lemma \ref{consistency}  suffices to show that incompatible blocks cannot be confirmed. All confirmed blocks corresponding to the same super-epoch must therefore record the same ordered set of requests. When a correct replica first sees a block corresponding to super-epoch $s$ confirmed, it follows inductively that, for each previous super-epoch, at least one correct replica must have seen a block corresponding to that super-epoch confirmed, and will have broadcast that block together with a stage 3 QC for the block. This means that, when any correct replica first sees a block corresponding to a given super-epoch confirmed, within time $\Delta$ it will know at least one confirmed block from each previous super-epoch, and will therefore have the information it needs to execute the requests in order. 

\vspace{0.2cm} 
\noindent \emph{Liveness.} To establish liveness, we follow almost exactly the same proof as for Lemma \ref{liveness}.  
 Let $s_0$ and $e_0$ be the greatest super-epoch and epoch that any correct replica is in at any time $<\text{GST}$ (so that $e_0$ is an epoch within $s_0$).  Let $t_1$  be the first time (if there exists such) at which any correct replica enters either super-epoch $s_0+1$ (meaning that they have seen a block corresponding to $s_0$ confirmed) or epoch $e_0+1$ of  $s_0$.  Whenever we refer to epoch $e_0+1$ (or epoch $e_0$) in the following, it is always to be assumed that this means epoch $e_0+1$ (or $e_0$) of super-epoch $s_0$. Let
$v_1$ be the least view of epoch $e_0+1$ which has a correct leader.

 We produce a bound on $t_1$, by much the same argument as in the proof of  Lemma \ref{liveness}. All correct replicas will see confirmed blocks corresponding to all super-epochs $<s_0$ and an EC for epoch $e_0$ by  $GST +\Delta$, and so will either be in a super-epoch $>s_0$ or else in an  epoch $\geq e_0$ of $s_0$ by this time. The same reasoning as in the proof of Lemma \ref{liveness} then suffices to show that $t_1\leq GST +2 \Delta + 12(f+1)\Delta $. 
  Let $t^{\ast}$ be the first time (if it exists) at which a correct replica sees a block corresponding to super-epoch $s_0$  confirmed,  and let $t_2 = \text{min} \{ t^{\ast}, t_1+12(v_1+1)\Delta \}$. Note that:

\begin{itemize} 
\item[($\diamond_0)$] No correct replica can enter an epoch $>e_0+1$ prior to $t_2$. 
\item[($\diamond_1)$] No correct replica can enter any view $v>v_1$ of epoch $e_0+1$  prior to $t_2$. 
\end{itemize} 

We claim that:

\begin{enumerate} 
\item[$(\diamond_2)$]  The number of bits sent by correct replicas (combined) in the time interval $(\text{GST}+\Delta, t_2]$ is $O(\kappa n^2)$. 
\item[$(\diamond_3)$]  A block corresponding to super-epoch $s_0$ will be confirmed and seen by a correct replica by time $t_2$.
\end{enumerate} 

From $(\diamond_2)$ and $(\diamond_3)$ it follows that $O(\kappa n^2)$ bits are sent by all correct replicas combined after $GST+\Delta$ and before the first block confirmation. It also follows that the  time to first confirmation after $GST$ is  $O(f\Delta )$,  since  $t_1\leq GST +2 \Delta + 12(f+1)\Delta $ and $t_2 \leq  t_1+12(v_1+1)\Delta$.

As before, statement $(\diamond_2)$ follows straight from the definition of the protocol. 
We are left to establish  $(\diamond_3)$, which again follows essentially just as in Hotstuff. Suppose, towards a contradiction, that no block corresponding to super-epoch $s_0$ is confirmed and seen by a correct replica by  time $t_1 +12(v_1+1)\Delta$. By $(\diamond_0)$, it follows that all correct replicas will be in epoch $e_0+1$  by time $t_1+\Delta$. By $(\diamond_0)$ and $(\diamond_1)$, it follows that all correct replicas will be in view $v_1$ of epoch $e_0+1$ by time $t_1+3\Delta +12v_1\Delta$ (and within time $\Delta$ of each other). Before broadcasting the VC for view $v_1$, the leader receives $\mathtt{view}\ v_1$ messages from at least $n-2f$ correct replicas, and each of these messages includes the highest QC that the replica has seen at the point that the message is sent, together with the corresponding block. When correct replicas wish to enter view $v_1$ and send a $\mathtt{view}\ v_1$ message, they cease executing instructions for previous views. Now consider the highest lock of any correct replica $i$  before receiving the block $B$ from the leader of view $v_1$. This lock was set because $i$ saw a stage 2 QC for a block $B'$. This means that at least $n-2f$ correct replicas must have produced stage 2 votes for $B'$ (prior to any point at which they wished to enter view $v_1$ and stopped executing instructions for views $<v_1$), and so must have seen a stage 1 QC for $B'$ by that time. Since any two sets of $n-2f$ correct replicas have a correct replica in the intersection, it follows that one of the $\mathtt{view}\ v_1$ messages seen by the leader of $v_1$ must be from a correct replica who had already seen the stage 1 QC for $B'$. 
This means that all correct replicas will then produce stage 1,2 and 3 votes for the block proposed by the  leader of view  $v_1$, and this block will be confirmed before time  $t_1 +12(i+1)\Delta$, giving the required contradiction.

\section{Directions for future work} 

We close by pointing out some weaknesses in the proofs we have presented here, and asking to what extent those weaknesses are necessary.  First of all, consider the protocol described in Section \ref{protocol}, for the proof of Theorem \ref{t1}. A weakness of this protocol is that, after GST, a \emph{single} faulty replica can produce a delay of length $\Omega (f\Delta)$ in the production of confirmed blocks. For example, if the first $f$ leaders of an epoch are all correct and produce $f$ confirmed blocks in close to zero time, then a faulty leader for view $f$ of the epoch can cause a delay of length  $\Omega (f\Delta)$ simply by failing to propose any block. 
The following (slightly imprecise) question naturally arises: 

\begin{question}
Can one improve the protocol of Section \ref{protocol} so that it still establishes the claims of Theorem \ref{t1}, while also ensuring that any set of $k$ consecutive faulty leaders can only cause a delay of length $O(k\Delta)$ in the production of confirmed blocks? 
\end{question}

Next, we consider two weakness in the proof of Theorem \ref{t2}. The protocol of Section \ref{protocol} actually produced something stronger than explicitly claimed in the statement of Theorem \ref{t1}, in that the claims of  Theorem \ref{t1} are all shown to hold if we restrict attention to confirmed blocks \emph{produced by correct leaders}, e.g.\ at most $O(n^2)$ many messages are sent after GST$+\Delta$ and prior to the first confirmation of a block \emph{produced by a correct leader}. In the proof of Theorem \ref{t2}, on the other hand, a block containing a set of requests that are proposed by a correct leader will be produced whenever the first leader of a super-epoch is correct, but super-epochs for which the first leader is faulty  may confirm a set of requests that are proposed by a faulty leader (even if it is a correct leader that eventually produces a confirmed block). This means that $O(\kappa n^2)$ many bits need to be sent for each block confirmation after GST, but if $f$ consecutive super-epochs produce blocks confirming a set of requests proposed by a faulty leader, then $O(\kappa n^3)$ many bits are required to be sent before the first confirmation of a block containing requests proposed by a correct leader. 

\begin{question}
Can one modify the proof of Theorem \ref{t2} so that the result still holds when we restrict attention to blocks that confirm sets of requests proposed by correct leaders? 
\end{question}

Another obvious weakness of Theorem \ref{t2}, is that we dropped the requirement for $O(\kappa n)$ mean communication complexity: 

\begin{question}
Consider the partial synchrony model. Does there  exist a BFT SMR protocol with optimal resiliency, and worst-case communication complexity $O(\kappa n^2)$, which is also optimistically responsive, with $O(f\Delta )$ time to first confirmation after $GST$ and $O(\kappa n)$ mean communication complexity? 
\end{question}

\section*{Acknowledgements} 
The author would like to thank Ittai Abraham, Oded Naor, Kartik Nayak and Ling Ren for a number of helpful conversations.

\newcommand{\etalchar}[1]{$^{#1}$}

\end{document}